\newtheorem{theorem}{Theorem}
\newtheorem{definition}{Definition}
\newtheorem{proposition}{Proposition}
\DeclareMathOperator*{\argmax}{arg\,max}
\DeclareMathOperator*{\argmin}{arg\,min}
\begin{document}
\title{Safe Equilibrium}
\author{Sam Ganzfried\\
Ganzfried Research\\
sam.ganzfried@gmail.com
}

\date{\vspace{-5ex}}

\maketitle

\begin{abstract}
The standard game-theoretic solution concept, Nash equilibrium, assumes that all players behave rationally. If we follow a Nash equilibrium and opponents are irrational (or follow strategies from a different Nash equilibrium), then we may obtain an extremely low payoff. On the other hand, a maximin strategy assumes that all opposing agents are playing to minimize our payoff (even if it is not in their best interest), and ensures the maximal possible worst-case payoff, but results in exceedingly conservative play. We propose a new solution concept called safe equilibrium that models opponents as behaving rationally with a specified probability and behaving potentially arbitrarily with the remaining probability.  We prove that a safe equilibrium exists in all strategic-form games (for all possible values of the rationality parameters), and prove that its computation is PPAD-hard. We present exact algorithms for computing a safe equilibrium in both 2 and $n$-player games, as well as scalable approximation algorithms.
\end{abstract}

\section{Introduction}
\label{se:intro}
In designing a strategy for a multiagent interaction an agent must balance between the assumption that opponents are behaving rationally with the risks that may occur if opponents behave irrationally. Most classic game-theoretic solution concepts, such as Nash equilibrium (NE), assume that all players are behaving rationally (and that this fact is common knowledge).  
On the other hand, a maximin strategy plays a strategy that has the largest worst-case guaranteed expected payoff; this limits the potential downside against a worst-case and potentially irrational 
opponent, but can also cause us to achieve significantly lower payoff against rational opponents. In two-player zero-sum games, Nash equilibrium and maximin strategies are equivalent (by the minimax theorem),
and these two goals are completely aligned. But in non-zero-sum games and games with more than two players, this is not the case. In these games we can potentially obtain arbitrarily low payoff
by following a Nash equilibrium strategy, but if we follow a maximin strategy will likely be playing far too conservatively. While the assumption that opponents are exhibiting a degree of rationality, as well as the desire to limit worst-case performance in the case of irrational opponents, are both desirable, neither the Nash equilibrium nor maximin solution concept is definitively compelling on its own.

We propose a new solution concept that balances between these two extremes. In a two-player general-sum game, we define an \emph{$\epsilon$-safe equilibrium} ($\epsilon$-SE) as a strategy profile where each player $i$ is playing a strategy that minimizes performance of the opponent with probability $\epsilon_i$, and is playing a best response to the opponent's strategy with probability $1-\epsilon_i$, where $\epsilon = (\epsilon_1, \epsilon_2).$ As a special case, if we are interested in constructing a strategy for player 1, we can set $\epsilon_1 = 0$, assuming irrationality just for player 2. We can generalize this to an $n$-player game by assuming that all players $i \neq 1$ are playing a strategy that minimizes player 1's expected payoff with probability $\epsilon_i$, and are playing a best response to all other players' strategies with probability $1-\epsilon_i$, while player 1 plays a best response to all other players' strategies. This concept balances explicitly between the assumption of players' rationality and the desire to ensure safety in the worst case through the $\epsilon_i$ parameters. From a theoretical perspective we show that an $\epsilon$-safe equilibrium is always guaranteed to exist and is PPAD-hard to compute (assuming $\epsilon_i < 1).$ Thus, it has the same existence and complexity results as Nash equilibrium.

Several other game-theoretic solution concepts have been previously proposed to account for degrees of opponents' rationality. The most prominent is \emph{trembling-hand perfect equilibrium} (THPE), which is a 
refinement of Nash equilibrium that is robust to the possibility that players ``tremble'' and play each pure strategy with arbitrarily small probability~\cite{Selten75:Reexamination}.  The concept of $\epsilon$-safe equilibrium differs from THPE in several key ways. First, it allows a player to specify an arbitrary belief on the probability that each other player is irrational, rather than assume that it is an extremely small value. In domains like national security or driving we risk losing lives in the event that we fail to properly account for opponents' irrationality, and may elect to use larger values for $\epsilon_i$ than in situations where safety is less of a concern. In an $\epsilon$-SE a player can specify the values for $\epsilon_i$ based on prior beliefs about the opponent or any relevant domain-specific knowledge, and is still free to use values that are extremely close to 0 as in THPE. Furthermore, a THPE is a refinement of NE, while $\epsilon$-SE and NE are incomparable (an $\epsilon$-SE may not be an NE and vice versa). Another related concept is that of a \emph{safe strategy} and \emph{$\epsilon$-safe strategy}~\cite{McCracken04:Safe}. A strategy for a player in a two-player zero-sum game is called safe if it guarantees an expected payoff of at least $v^*$---the value of the game to the player---in the worse case. Note that this also coincides with the set of minimax, maximin, and Nash equilibrium strategies. A strategy is $\epsilon$-safe if it obtains a worst-case expected payoff of at least $v^*-\epsilon.$ The concepts of safe and $\epsilon$-safe strategies are defined just for two-player zero-sum games, while safe and $\epsilon$-safe equilibrium also apply to non-zero-sum and multiplayer games. 

We note that a belief of opponents' ``irrationality'' does not necessarily indicate that we believe them to be ``stupid'' or ``crazy.'' It may simply correspond to a belief that the opponent may have a different model of the game than we do. For example, our analysis may indicate that a successful attack on a location would result in a certain payoff for the opponent, while their analysis indicates a different payoff. In addition to potentially constructing different assessments of their own or other players' payoffs, opponents may also be ``irrational'' because they are using an algorithm for computing a Nash equilibrium that is only able to yield an approximation, or just a different Nash equilibrium from what other players have calculated (in fact, these cases do not actually seem to be irrational at all, since computing a Nash equilibrium is computationally challenging and many games have multiple Nash equilibria). If any of these situations arise, then simply following an arbitrary Nash equilibrium strategy runs a risk of an extremely low payoff, and there is potential for significant benefit by ensuring a degree of safety.

An alternative approach for modeling potentially irrational opponents is to incorporate an \emph{opponent modeling algorithm}. Opponent modeling algorithms typically require the use of domain-specific expertise and databases of historical play to construct a prior distribution for opponents' strategies and use machine learning algorithms that predict a strategy (or distribution of strategies) for the opponents taking into account the prior and observations of publicly observable actions. This can be extremely valuable if domain expertise, large amounts of historical data, and a large number of observations of opponents' play are available. Often such information is not available and we are forced to construct our strategy without any additional data-specific tendencies of the opponent. We note that if such data is available, the safe equilibrium concept can be integrated with opponent modeling to successfully achieve robust opponent exploitation. An approach called a restricted Nash response was developed for two-player zero-sum games where the opponent is restricted to play a fixed strategy $\sigma_{\mbox{fix}}$ determined by an opponent model with probability $p$ and plays a best response to us with probability $1-p$ while we best respond to the opponent (it is shown that this approach is equivalent to playing an $\epsilon$-safe best response to $\sigma_{\mbox{fix}}$ (a best response to $\sigma_{\mbox{fix}}$ out of strategies that are $\epsilon$-safe) for some $\epsilon$)~\cite{Johanson07:Computing}. It was shown that for certain values of $p$ this approach can result in a significant reduction in the level of exploitability of our own strategy while only a slight reduction in our degree of exploitation of the opponent's strategy. It has also been shown that approaches that compute an $\epsilon$-safe best response to a model of the opponent's strategy for dynamically changing values of $\epsilon$ in repeated two-player zero-sum games can guarantee safety~\cite{Ganzfried15:Safe}. An $\epsilon$-safe equilibrium strategy can be used in non-zero-sum and multiplayer games where models are available for the opponents' strategies by assuming each opponent $i$ follows their opponent model with probability $\epsilon_i$ instead of playing a worst-case strategy for us, while also playing a best response with probability $1-\epsilon_i.$ Thus, in the event that an opponent model is available we can view safe equilibrium as a generalization of restricted Nash response to achieve robust opponent exploitation in the settings of non-zero-sum and multiplayer games.

\section{Safe Equilibrium}
\label{se:se}
A \emph{strategic-form game} consists of a finite set of players $N = \{1,\ldots,n\}$, a finite set of pure strategies $S_i$ for each player $i \in N$, and a real-valued utility for each player for each strategy vector (aka \emph{strategy profile}), $u_i : \times_i S_i \rightarrow \mathbb{R}$. A \emph{mixed strategy} $\sigma_i$ for player $i$ is a probability distribution over pure strategies, where $\sigma_i(s_{i'})$ is the probability that player $i$ plays pure strategy $s_{i'} \in S_i$ under $\sigma_i$. Let $\Sigma_i$ denote the full set of mixed strategies for player $i$. A strategy profile $\sigma^* = (\sigma^*_1,\ldots,\sigma^*_n)$ is a \emph{Nash equilibrium} if $u_i(\sigma^*_i,\sigma^*_{-i}) \geq u_i(\sigma_i, \sigma^*_{-i})$ for all $\sigma_i \in \Sigma_i$ for all $i \in N$, where $\sigma^*_{-i} \in \Sigma_{-i}$ denotes the vector of the components of strategy $\sigma^*$ for all players excluding player $i$. Here $u_i$ denotes the expected utility for player $i$, and $\Sigma_{-i}$ denotes the set of strategy profiles for all players excluding player $i$. 

A mixed strategy $\sigma^*_i$ for player $i$ is a \emph{maximin strategy} if 
$$\sigma^*_i \in \argmax_{\sigma_i \in \Sigma_i} \min_{\sigma_{-i} \in \Sigma_{-i}} u_i(\sigma_i,\sigma_{-i}).$$

\begin{definition}
\label{de:se2p}
Let $G$ be a two-player strategic-form game. Let $\epsilon = (\epsilon_1,\epsilon_2)$, where $\epsilon_i \in [0,1]$ for $i = 1,2.$ A strategy profile $\sigma^*$ is an \emph{$\epsilon$-safe equilibrium} if there exist mixed strategies $\tau^*_i, \rho^*_i \in \Sigma_i$ where $\sigma^*_i = \epsilon_i \tau^*_i + (1-\epsilon_i)\rho^*_i$ for $i = 1,2$ such that $\rho^*_i \in \argmax_{\sigma_i \in \Sigma_i} u_i(\sigma_i,\sigma^*_{-i})$,
$\tau^*_i \in \argmin_{\sigma_i \in \Sigma_i} u_{-i}(\sigma^*_{-i},\sigma_i)$.
\end{definition}

In practice player $i$ would likely want to set $\epsilon_i = 0$ and $\epsilon_j > 0$ for $j \neq i$ when determining their own strategy, though Definition~\ref{de:se2p} allows an arbitrary value of $\epsilon_i \in [0,1]$ as well. It may make sense for player $i$ to set $\epsilon_i > 0$ if they believe both that the opponent is irrational with some probability $\epsilon_{-i}$, and if they also believe that the opponent believes that player $i$ is irrational with some probability $\epsilon_i.$

\begin{theorem}
\label{th:2p}
Let $G = (N,(S_i)_{i \in N},(u_i)_{i \in N})$ be a two-player strategic-form game, and let $\epsilon = (\epsilon_1,\epsilon_2)$, where $\epsilon_1,\epsilon_2 \in [0,1].$ Then $G$ contains an $\epsilon$-safe equilibrium.
\end{theorem}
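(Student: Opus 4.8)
The plan is to recast the defining conditions of an $\epsilon$-safe equilibrium as a fixed point of a suitable set-valued map and then invoke Kakutani's fixed-point theorem, mirroring Nash's classical existence argument. For a profile $\sigma = (\sigma_1,\sigma_2)$ I would define, for each player $i$, the best-response set $R_i(\sigma_{-i}) = \argmax_{\sigma_i \in \Sigma_i} u_i(\sigma_i,\sigma_{-i})$ and the ``attack'' set $T_i(\sigma_{-i}) = \argmin_{\sigma_i \in \Sigma_i} u_{-i}(\sigma_{-i},\sigma_i)$, and then set
\[
F(\sigma) = \bigl\{(\pi_1,\pi_2) : \pi_i = \epsilon_i \tau_i + (1-\epsilon_i)\rho_i,\ \rho_i \in R_i(\sigma_{-i}),\ \tau_i \in T_i(\sigma_{-i}),\ i=1,2\bigr\}.
\]
By construction, a profile $\sigma^*$ satisfies $\sigma^* \in F(\sigma^*)$ if and only if there exist witnesses $\rho^*_i, \tau^*_i$ meeting the requirements of Definition~\ref{de:se2p}; hence every fixed point of $F$ is exactly an $\epsilon$-safe equilibrium, and it suffices to verify the Kakutani hypotheses for $F$ on the domain $\Sigma_1 \times \Sigma_2$.

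The domain is a product of probability simplices, hence a nonempty, compact, convex subset of a Euclidean space. For each fixed $\sigma$, the objectives $u_i(\cdot,\sigma_{-i})$ and $u_{-i}(\sigma_{-i},\cdot)$ are linear, hence continuous, in the optimizing variable, so $R_i(\sigma_{-i})$ and $T_i(\sigma_{-i})$ are nonempty (a continuous function attains its extrema on a compact set) and are faces of the simplex, hence convex. Consequently each component $\{\epsilon_i \tau_i + (1-\epsilon_i)\rho_i\}$ is the affine image of the convex set $T_i(\sigma_{-i}) \times R_i(\sigma_{-i})$ and is therefore convex, so $F(\sigma)$, a product of such sets, is nonempty and convex.

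The step I expect to be the crux is verifying that $F$ has a closed graph (equivalently, is upper hemicontinuous with closed values). The engine is Berge's maximum theorem: because $u_i(\sigma_i,\sigma_{-i})$ is jointly continuous, the correspondences $\sigma_{-i} \mapsto R_i(\sigma_{-i})$ and $\sigma_{-i} \mapsto T_i(\sigma_{-i})$ are upper hemicontinuous with nonempty compact values. To handle their convex mixture I would argue directly by sequences: given $\sigma^{(k)} \to \sigma$ and images $\pi^{(k)} = (\epsilon_i \tau_i^{(k)} + (1-\epsilon_i)\rho_i^{(k)})_i \to \pi$, compactness of each $\Sigma_i$ lets me pass to a subsequence along which $\tau_i^{(k)} \to \tau_i$ and $\rho_i^{(k)} \to \rho_i$; upper hemicontinuity of $R_i$ and $T_i$ then gives $\rho_i \in R_i(\sigma_{-i})$ and $\tau_i \in T_i(\sigma_{-i})$, and passing to the limit in the affine combination yields $\pi_i = \epsilon_i \tau_i + (1-\epsilon_i)\rho_i$, so $\pi \in F(\sigma)$. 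With every Kakutani hypothesis in place, $F$ admits a fixed point, which is the desired $\epsilon$-safe equilibrium. The only real delicacy is the bookkeeping for the two coupled correspondences and their mixture; no individual step is genuinely hard once the problem is framed as a Kakutani fixed point, and the argument goes through verbatim for every $\epsilon_1,\epsilon_2 \in [0,1]$ since the endpoints merely collapse $F$ onto a pure best-response or pure attack correspondence.
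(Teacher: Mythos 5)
Your proof is correct, but it takes a genuinely different route from the paper's. The paper proves existence by reduction: it constructs an auxiliary four-player game $G'$ whose players correspond one-to-one with the four witness strategies $\tau^*_1,\rho^*_1,\tau^*_2,\rho^*_2$, with payoffs built so that each auxiliary player's best-response condition is exactly the corresponding $\argmax$/$\argmin$ condition of Definition~\ref{de:se2p} evaluated against the aggregate $\epsilon_i \tau_i + (1-\epsilon_i)\rho_i$ (e.g., auxiliary player 1 receives $-\epsilon_2 u_2(s'_1,s'_3) - (1-\epsilon_2) u_2(s'_1,s'_4)$, making it the minimizer of player 2's payoff against player 2's mixture); Nash's existence theorem applied to $G'$ then delivers the $\epsilon$-safe equilibrium of $G$. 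You instead inline the fixed-point machinery that sits underneath Nash's theorem: you define the mixture correspondence $F_i(\sigma) = \epsilon_i T_i(\sigma_{-i}) + (1-\epsilon_i) R_i(\sigma_{-i})$ on $\Sigma_1 \times \Sigma_2$ and verify Kakutani's hypotheses directly, and your verification is sound --- nonemptiness and convexity follow from linearity of the objectives over the simplex, the closed-graph argument via subsequence extraction for the two witness sequences is exactly right (the key point being that the limits of the witnesses land back in $R_i(\sigma_{-i})$ and $T_i(\sigma_{-i})$ by upper hemicontinuity), and your fixed-point characterization matches Definition~\ref{de:se2p} verbatim, including the degenerate endpoints $\epsilon_i \in \{0,1\}$. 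The trade-off: your argument is self-contained and spares the reader from checking that the auxiliary payoffs encode the intended incentives, while the paper's reduction has a byproduct your purely existential argument lacks --- any Nash-equilibrium solver applied to $G'$ computes a safe equilibrium, which (for rational $\epsilon$) places the problem in PPAD, complementing the hardness result of Theorem~\ref{th:complexity-2p}, and the same construction extends immediately to $n$ players via a $(2n-1)$-player auxiliary game as noted before Theorem~\ref{th:np}.
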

\begin{proof}
Define $G' = (N',(S'_i)_{i \in N},(u'_i)_{i \in N})$ to be the following game. $N' = \{1,2,3,4\}$, $S'_1 = S'_2 = S_1$, $S'_3 = S'_4 = S_2.$ 
For $s'_i \in S'_i$, define $u'_i$ as follows for $i \in N$:
$$u'_1(s'_1,s'_2,s'_3,s'_4) = -\epsilon_2 u_2(s'_1,s'_3) - (1-\epsilon_2)u_2(s'_1,s'_4)$$
$$u'_2(s'_1,s'_2,s'_3,s'_4) = \epsilon_2 u_1(s'_2,s'_3) + (1-\epsilon_2)u_1(s'_2,s'_4)$$
$$u'_3(s'_1,s'_2,s'_3,s'_4) = -\epsilon_1 u_1(s'_1,s'_3) - (1-\epsilon_1)u_1(s'_2,s'_3)$$
$$u'_4(s'_1,s'_2,s'_3,s'_4) = \epsilon_1 u_2(s'_1,s'_4) + (1-\epsilon_1)u_2(s'_2,s'_4)$$
Player 1's strategy corresponds to $\tau^*_1$, player 2's strategy corresponds to $\rho^*_1$, player 3's strategy corresponds to $\tau^*_2$, and player 4's strategy corresponds to $\rho^*_2.$
By Nash's existence theorem, the game $G'$ has a Nash equilibrium, which corresponds to an $\epsilon$-safe equilibrium of $G.$
\end{proof}

\begin{theorem}
\label{th:complexity-2p}
Let $\epsilon = (\epsilon_1,\epsilon_2)$, where $\epsilon_1,\epsilon_2 \in [0,1)$ are fixed constants. The problem of computing an $\epsilon$-safe equilibrium is PPAD-hard.
\end{theorem}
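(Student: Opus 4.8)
The plan is to give a polynomial-time reduction from the problem of computing a Nash equilibrium of a two-player (bimatrix) game, whose PPAD-completeness is the celebrated benchmark result for this class, to the problem of computing an $\epsilon$-safe equilibrium. Given an arbitrary two-player game $G$ with action sets $A_1, A_2$ and utilities $u_i$, I would first apply an independent positive affine transformation to each player's payoffs so that they all lie in $[1,2]$. This preserves the set of Nash equilibria (best responses are invariant under $u_i \mapsto \alpha u_i + \beta$ with $\alpha > 0$) while making every payoff strictly positive, a property I will need for a strictness argument. I then build a game $G'$ by adjoining one ``gadget'' action $*_i$ to each player's action set and fixing a constant $M$ larger than the payoff range.

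The gadget action is engineered to play two opposing roles at once: it must be the unique minimizer of the opponent's payoff (so that the component $\tau_i^*$ selects it) yet strictly suboptimal for one's own best response (so that the component $\rho_i^*$ avoids it). Concretely, on pairs of original actions $G'$ copies $G$; when a player plays their gadget that player receives $-M$ and inflicts $-M$ on the opponent who is also playing a gadget, while an original action played against the opponent's gadget yields the action-independent value $0$ to the original-action player and $-M$ to the gadget player; and when both play gadgets both receive $-M$. The delicate requirement is that the gadget column contribute only a constant (here $0$) to the opponent's payoff across all of the opponent's original actions, so that best responses in $G'$ project exactly onto best responses in $G$.

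The verification proceeds in four steps, carried out in this order to avoid circularity. First, since every original action guarantees payoff at least $0$ against any opponent strategy while the gadget guarantees $-M < 0$, the best-response component satisfies $\rho_i^*(*_i) = 0$ for each $i$; this step is unconditional. Second, it follows that $\sigma_i^*(*_i) = \epsilon_i \tau_i^*(*_i) \le \epsilon_i < 1$, where $\epsilon_i < 1$ is essential. Third, because the opponent then places mass strictly below $1$ on their gadget and all original payoffs are strictly positive, the gadget becomes the unique minimizer of the opponent's expected payoff, forcing $\tau_i^* = *_i$ (this uses the second step for the other player). Fourth, with $\tau_{-i}^* = *_{-i}$ we have $\sigma_{-i}^*(a_{-i}) = (1-\epsilon_{-i})\rho_{-i}^*(a_{-i})$ on original actions, and since the gadget column adds only the action-independent constant $0$ to player $i$'s payoff, maximizing over player $i$'s original actions reduces to best-responding to $\rho_{-i}^*$ in $G$. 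Hence $(\rho_1^*,\rho_2^*)$, recovered from $\sigma^*$ by $\rho_i^*(a_i) = \sigma_i^*(a_i)/(1-\epsilon_i)$, is a Nash equilibrium of $G$.

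I expect the main obstacle to be the simultaneous calibration of the gadget's payoffs so that all four steps hold without circular dependence and so that no degenerate ``all-gadget'' profile survives as a spurious safe equilibrium (this last pathology is ruled out precisely by the unconditional first step, $\rho_i^*(*_i) = 0$). The crux is that the gadget must be attractive to the minimizing component $\tau_i^*$ yet repulsive to the best-responding component $\rho_i^*$, while influencing the opponent's payoff only through an action-independent constant; obtaining the strict inequalities requires combining the strict positivity of the rescaled payoffs with $\epsilon_i < 1$. Everything else is routine: $G'$ has only one extra action per player and polynomially computable payoffs, and the recovery map $\sigma_i^* \mapsto \rho_i^*$ runs in polynomial time because each $\epsilon_i$ is a fixed constant bounded away from $1$.
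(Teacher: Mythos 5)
Your reduction is correct, and it follows the same skeleton as the paper's own proof: pad each player's strategy set with one ``punishment'' action that the minimizing component $\tau^*_i$ is forced onto and the best-response component $\rho^*_i$ is forced off of, then read a Nash equilibrium of $G$ out of $(\rho^*_1,\rho^*_2)$, invoking PPAD-hardness of bimatrix Nash (the hypothesis $\epsilon_i<1$ enters in both arguments at the same point, to guarantee the opponent's aggregate strategy puts positive mass on original actions). The difference is in the calibration of the gadget, and here your version is actually tighter than the paper's. The paper sets \emph{all} payoffs involving the added action $t$ (for both players) to a constant $k'$ below the minimum payoff of $G$, and asserts that $t$ is strictly dominated; that assertion is inaccurate, since playing $t$ ties with every original action when the opponent plays $t$, so $t$ is only weakly dominated --- and consequently the degenerate profile in which $\tau^*_i=\rho^*_i=t$ for both players \emph{is} a valid $\epsilon$-safe equilibrium of the paper's padded game, one from which no equilibrium of $G$ can be recovered. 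Your construction is engineered precisely to kill this pathology: by rescaling payoffs into $[1,2]$ and giving an original action the action-independent value $0$ against the opponent's gadget while the gadget always yields $-M<0$, your gadget is \emph{strictly} dominated against every opponent strategy, making the step $\rho^*_i(*_i)=0$ unconditional and breaking the circularity you correctly identified as the crux. You also make explicit two points the paper leaves implicit: that the gadget column must contribute only an action-independent constant to the opponent's payoff so that best responses project exactly onto $G$, and that $\rho^*_i$ is recoverable in polynomial time from $\sigma^*_i$ alone via $\rho^*_i(a_i)=\sigma^*_i(a_i)/(1-\epsilon_i)$. In short: same approach, but your variant repairs a genuine (if repairable) flaw in the published argument.
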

\begin{proof}
Let $G = (N,(S_i)_{i \in N},(u_i)_{i \in N})$ be a two-player strategic-form game. 
Suppose that $k$ is the smallest possible payoff for any player in $G$, and let $k' = k-1.$
Define the game $G' = (N',(S'_i)_{i \in N},(u'_i)_{i \in N})$ as follows. $N' = \{1,2\}$, $S'_1 = S_1 \cup t$, $S'_2 = S_2 \cup t.$ 
For $s'_i \in S'_i$, define $u'_i$ as follows for $i \in N$:
$$u'_i(s'_1,s'_2) = u_i(s'_1,s'_2) \mbox{ for } s_1 \in S_1, s_2 \in S_2.$$
$$u'_i(t,s'_2) = k' \mbox{ for } s'_2 \in S_2.$$
$$u'_i(s'_1,t) = k' \mbox{ for } s'_1 \in S_1.$$
$$u'_i(t,t) = k'.$$

Suppose we can efficiently compute an $\epsilon$-safe equilibrium of $G'$, denoted by $\sigma^{G'}.$  
Then we have $\sigma^{G'}_i = \epsilon_i \tau^*_i + (1-\epsilon_i)\rho^*_i$ for $i = 1,2,$ where $\rho^*_i \in \argmax_{\sigma'_i \in \Sigma'_i} u_i(\sigma'_i,\sigma^{G'}_{-i}),$ $\tau^*_i \in \argmin_{\sigma'_i \in \Sigma'_i} u_{-i}(\sigma^{G'}_{-i},\sigma'_i).$ 

I claim that $\rho^*$ is a Nash equilibrium of $G$. First note that $\rho^*_i$ must put probability 0 on $t$ for all players, since $t$ is strictly dominated. So it is a valid strategy profile of $G$. Also note that $\tau^*_i$ must put probability 1 on $t$ for all $i.$

Suppose that player $i$ can improve performance in $G$ by deviating to $\eta_i.$ Then 
$$u_i(\eta_i,\rho^*_{-i}) > u_i(\rho^*_i,\rho^*_{-i})$$
$$(1-\epsilon_i) u_i(\eta_i,\rho^*_{-i}) + \epsilon_i k' > (1-\epsilon_i) u_i(\rho^*_i,\rho^*_{-i}) + \epsilon_i k'$$
$$(1-\epsilon_i) u_i(\eta_i,\rho^*_{-i}) + \epsilon_i u_i(\eta_i,t) > (1-\epsilon_i) u_i(\rho^*_i,\rho^*_{-i}) +  \epsilon_i u_i(\eta_i,t)$$
$$(1-\epsilon_i) u_i(\eta_i,\rho^*_{-i}) + \epsilon_i u_i(\eta_i,t) > (1-\epsilon_i) u_i(\rho^*_i,\rho^*_{-i}) +  \epsilon_i u_i(\rho^*_i,t)$$
$$(1-\epsilon_i) u_i(\eta_i,\rho^*_{-i}) + \epsilon_i u_i(\eta_i,\tau^*_{-i}) > (1-\epsilon_i) u_i(\rho^*_i,\rho^*_{-i}) +  \epsilon_i u_i(\rho^*_i,\tau^*_{-i})$$
$$u_i(\eta_i,\sigma^{G'}_{-i}) > u_i(\rho^*_i,\sigma^{G'}_{-i}).$$
This contradicts the fact that $\rho^*_i \in \argmax_{\sigma'_i \in \Sigma'_i} u_i(\sigma'_i,\sigma^{G'}_{-i}).$ So we have a contradiction, and conclude that no player can improve performance in $G$ by deviating from $\rho^*.$ So $\rho^*$ is a Nash equilibrium of $G$.

Since the problem of computing a Nash equilibrium is PPAD-hard and we have reduced it to the problem of computing an $\epsilon$-safe equilibrium, this shows that the problem of computing an $\epsilon$-safe equilibrium is PPAD-hard.
\end{proof}

For $n > 2$ players, we designate one of the players as being a special player, say player 1. We can view player 1 as representing ``ourselves'' as a decision-making agent, and the other players as unpredictable opponents. Player 1 then best responds to the strategy profile of all other players, while each opposing player $i$ mixes between playing a strategy that minimizes player 1's payoff and a strategy that maximizes player $i$'s payoff in response to the strategy profile of the other players.

\begin{definition}
\label{de:senp}
Let $G$ be an n-player strategic-form game. Let $\epsilon = (\epsilon_2,\ldots,\epsilon_n)$, where $\epsilon_i \in [0,1].$ A strategy profile $\sigma^*$ is an \emph{$\epsilon$-safe equilibrium} if there exists a mixed strategy $\sigma^*_1$ for player 1 and mixed strategies $\tau^*_i, \rho^*_i \in \Sigma_i$ where $\sigma^*_i = \epsilon_i \tau^*_i + (1-\epsilon_i)\rho^*_i$ for $i = 2,\ldots,n$ such that $\rho^*_i \in \argmax_{\sigma_i \in \Sigma_i} u_i(\sigma_i,\sigma^*_{-i})$,
$\tau^*_i \in \argmin_{\sigma_i \in \Sigma_i} u_{1}(\sigma^*_1,\sigma')$,
$\sigma^*_1 \in \argmax_{\sigma_1 \in \Sigma_1} u_1(\sigma_1,\sigma^*_{-1}),$
where $\sigma'$ is the strategy profile for players in $\{2,\ldots,n\}$ where player $i$ plays $\sigma_i$ and the other players $j \neq i$ play $\sigma^*_j.$
\end{definition}

The proof of Theorem~\ref{th:2p} extends naturally to $n > 2$ players as well by creating a $2(n-1)+1 = 2n-1$ player game with 2 new players corresponding to each player in the initial game for $i > 1$, plus player 1.

\begin{theorem}
\label{th:np}
Let $G = (N,(S_i)_{i \in N},(u_i)_{i \in N})$ be an $n$-player strategic-form game, and let $\epsilon = (\epsilon_2,\ldots,\epsilon_n)$, where $\epsilon_i \in [0,1].$ Then $G$ contains an $\epsilon$-safe equilibrium.
\end{theorem}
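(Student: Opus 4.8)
The plan is to generalize the product-game construction used in the proof of Theorem~\ref{th:2p}: I would build an auxiliary ordinary game $G'$ whose Nash equilibria induce $\epsilon$-safe equilibria of $G$, and then invoke Nash's existence theorem. The only genuinely new feature relative to the two-player case is that player $1$ is no longer ``split'' (it merely best responds), so exactly two virtual players are needed for each of the $n-1$ remaining players, giving the announced total of $2(n-1)+1 = 2n-1$ players.

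First I would define $G'$ with player set $N' = \{1\} \cup \{t_i, r_i : 2 \le i \le n\}$: player $1$ is carried over unchanged, and each original player $i>1$ is split into a \emph{minimizing copy} $t_i$ (destined to realize $\tau^*_i$) and a \emph{best-responding copy} $r_i$ (destined to realize $\rho^*_i$). Player $1$ keeps strategy set $S_1$, while both $t_i$ and $r_i$ inherit $S_i$. Given a mixed profile of $G'$ in which player $1$ plays $\sigma_1$, each $t_i$ plays $\tau_i$, and each $r_i$ plays $\rho_i$, I introduce the \emph{induced profile} $\bar\sigma$ of $G$ by $\bar\sigma_1 = \sigma_1$ and $\bar\sigma_i = \epsilon_i\tau_i + (1-\epsilon_i)\rho_i$ for $i>1$. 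The payoffs of $G'$ then encode the three conditions of Definition~\ref{de:senp} directly: player $1$ receives $u_1(\bar\sigma)$; the copy $r_i$ receives $u_i$ evaluated at $\bar\sigma$ but with coordinate $i$ overwritten by $r_i$'s \emph{own} strategy at full weight; and the copy $t_i$ receives $-u_1$ evaluated at $\bar\sigma$ with coordinate $i$ overwritten by $t_i$'s own strategy.

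The crux of the argument, and the step I expect to require the most care, is verifying that these prescriptions define a legitimate finite strategic-form game, i.e.\ that each utility is a genuine expected utility (multilinear, hence determined by its values on pure profiles). This holds because expected utility has degree one in each player's strategy and each virtual player's own decision variable is inserted into exactly one argument slot of the relevant $u$: the strategy $\rho_i$ of $r_i$ appears only in coordinate $i$ and never inside any induced component $\bar\sigma_j$ with $j\neq i$, and likewise for $\tau_i$, so no player's strategy is ever squared. Defining the pure-strategy payoffs of $G'$ by plugging pure strategies into these expressions therefore yields a well-defined game whose mixed-extension expected utilities are exactly the ones above.

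Finally I would apply Nash's existence theorem to obtain an equilibrium $(\sigma^*_1, (\tau^*_i), (\rho^*_i))$ of $G'$ and read off the $\epsilon$-safe conditions coordinate by coordinate. Holding the other virtual players fixed at equilibrium, the induced opponent profile seen by player $1$ is exactly $\sigma^*_{-1}$, so optimality of player $1$ gives $\sigma^*_1 \in \argmax_{\sigma_1} u_1(\sigma_1,\sigma^*_{-1})$; optimality of $r_i$ gives $\rho^*_i \in \argmax_{\sigma_i} u_i(\sigma_i,\sigma^*_{-i})$ since the induced profile on the remaining coordinates coincides with $\sigma^*_{-i}$; and optimality of $t_i$, after negating, gives $\tau^*_i \in \argmin_{\sigma_i} u_1(\sigma^*_1,\sigma')$ with $\sigma'$ as in Definition~\ref{de:senp}. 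Because $\sigma^*_i = \epsilon_i\tau^*_i + (1-\epsilon_i)\rho^*_i$ holds by the very definition of the induced profile, the resulting $\sigma^*$ is an $\epsilon$-safe equilibrium of $G$, completing the proof.
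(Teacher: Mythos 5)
Your proof is correct and takes essentially the same route as the paper: the paper establishes Theorem~\ref{th:np} by exactly the sketched $2(n-1)+1 = 2n-1$ player extension of the Theorem~\ref{th:2p} construction (a minimizing and a best-responding copy of each player $i>1$, player $1$ kept whole, then Nash's existence theorem). Your write-up in fact supplies the details the paper leaves implicit---the explicit virtual payoffs, the multilinearity check showing no strategy is ever squared so the auxiliary game is a genuine finite game, and the coordinate-by-coordinate verification against Definition~\ref{de:senp}---all of which are handled correctly.
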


The proof of Theorem~\ref{th:complexity-2p} also straightforwardly extends to $n$ players.

\begin{theorem}
\label{th:complexity-np}
Let $\epsilon = (\epsilon_2,\ldots,\epsilon_n)$, where $\epsilon_i \in [0,1)$ are fixed constants. The problem of computing an $\epsilon$-safe equilibrium is PPAD-hard.
\end{theorem}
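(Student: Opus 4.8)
The plan is to reduce the PPAD-hard problem of computing a Nash equilibrium of an arbitrary $n$-player game $G = (N,(S_i)_{i \in N},(u_i)_{i \in N})$ to the problem of computing an $\epsilon$-safe equilibrium, directly generalizing the construction in the proof of Theorem~\ref{th:complexity-2p}. Letting $k$ denote the smallest payoff appearing in $G$ and setting $k' = k-1$, I would build an augmented game $G'$ by adjoining a single new ``punishment'' pure strategy $t$ to every player, so that $S'_i = S_i \cup \{t\}$, assigning every player the payoff $k'$ whenever \emph{any} player selects $t$ and leaving the payoffs on the original profiles in $\times_i S_i$ unchanged. The essential feature is that $k'$ lies strictly below every payoff achievable in $G$, so $t$ simultaneously serves as a strictly dominated action for any best-responding player and as a universal minimizer of player~1's payoff.

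Given an $\epsilon$-safe equilibrium $\sigma^{G'}$ of $G'$ with decomposition $\sigma^{G'}_i = \epsilon_i \tau^*_i + (1-\epsilon_i)\rho^*_i$ for $i \geq 2$ together with player~1's best response $\sigma^*_1$, I would first pin down both components. For each adversarial player $i \geq 2$, playing $t$ drives player~1's payoff to the global minimum $k'$, whereas any action in $S_i$ leaves a strictly positive probability $\prod_{j \neq i,\, j \geq 2}(1-\epsilon_j) > 0$ (using $\epsilon_j < 1$) that player~1 receives a payoff at least $k > k'$; hence the minimizer $\tau^*_i$ is forced to place all probability on $t$. Symmetrically, since any action in $S_i$ yields strictly more than $k'$ against $\sigma^{G'}_{-i}$, the best-response components $\rho^*_i$ and $\sigma^*_1$ place zero probability on $t$, so $\rho^* = (\sigma^*_1, \rho^*_2, \ldots, \rho^*_n)$ is a legitimate strategy profile of $G$. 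I would then show $\rho^*$ is a Nash equilibrium of $G$ by the same inequality chain as in Theorem~\ref{th:complexity-2p}: conditioned on the event that no player other than $i$ selects $t$, player $i$'s payoff in $G'$ is an affine increasing function of their payoff in $G$ against $(\sigma^*_1, \rho^*_{-\{1,i\}})$, so a profitable deviation $\eta_i$ in $G$ would, after adjoining the $t$-contingent terms, reconstitute a profitable deviation against $\sigma^{G'}_{-i}$ in $G'$, contradicting the best-response property of $\rho^*_i$ (or of $\sigma^*_1$ for the special player).

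The main obstacle, and the only genuinely new wrinkle relative to the two-player case, is the book-keeping introduced by having several independently ``trembling'' opponents: player $i$'s expected payoff in $G'$ is no longer a two-term mixture but a sum over all subsets of opponents that might play $t$. I would neutralize this by factoring out the single event ``every player other than $i$ avoids $t$,'' whose probability $\prod_{j \neq i,\, j \geq 2}(1-\epsilon_j)$ is a strictly positive constant independent of player $i$'s own action, while noting that on the complementary branches everyone receives exactly $k'$ regardless of what player $i$ does. Thus those $t$-branches contribute an identical additive constant to both sides of any deviation comparison and cancel, collapsing the analysis to the original single-opponent argument. Since computing a Nash equilibrium of an $n$-player game is PPAD-hard, this polynomial-time reduction establishes that computing an $\epsilon$-safe equilibrium is PPAD-hard.
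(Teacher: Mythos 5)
Your proposal takes essentially the same route as the paper: the paper proves Theorem~\ref{th:complexity-np} simply by asserting that the reduction of Theorem~\ref{th:complexity-2p} extends straightforwardly to $n$ players, and your writeup supplies exactly that extension---adjoining a punishment strategy $t$ to every player, forcing each $\tau^*_i$ onto $t$, keeping $t$ out of $\rho^*_i$ and $\sigma^*_1$, and observing that the branches where some opponent plays $t$ pay the constant $k'$ regardless of player $i$'s action, so they cancel in the deviation comparison and the two-player inequality chain carries over. The only caveat, inherited from (not introduced relative to) the paper's own Theorem~\ref{th:complexity-2p} argument, is that $t$ is weakly rather than strictly dominated (if all opponents play $t$, every action of player $i$ ties at $k'$), so the step ``$\rho^*_i$ and $\sigma^*_1$ place zero probability on $t$'' is circular against the degenerate all-$t$ profile, which formally is an $\epsilon$-safe equilibrium of $G'$; this is easily patched, e.g.\ by giving a player who himself selects $t$ a payoff strictly below $k'$, which makes $t$ strictly dominated while preserving the rest of your argument.
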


As an example, consider the classic game of Chicken, with payoffs given by Figure~\ref{fi:chicken}. The first action for each player corresponds to the ``swerve'' action, while the second corresponds to the ``straight'' action.
\begin{quotation}
The game of chicken models two drivers, both headed for a single-lane bridge from opposite directions. The first to swerve away yields the bridge to the other. If neither player swerves, the result is a costly deadlock in the middle of the bridge, or a potentially fatal head-on collision. It is presumed that the best thing for each driver is to stay straight while the other swerves (since the other is the ``chicken'' while a crash is avoided). Additionally, a crash is presumed to be the worst outcome for both players. This yields a situation where each player, in attempting to secure their best outcome, risks the worst~\cite{Wiki21:Chicken}.
\end{quotation}

\begin{figure}
\begin{equation*}
\begin{bmatrix}
(0,0) & (-1,+1) \\
(+1,-1) & (-10,-10) \\
\end{bmatrix}
\end{equation*}
\caption{Payoff matrix for game of Chicken.}
\label{fi:chicken}
\end{figure}

The unique mixed-strategy Nash equilibrium $\sigma^{NE}$ in the Chicken game is for each player to swerve with probability 0.9 (there are also two pure-strategy equilibria where one player swerves and the other player doesn't), and the unique maximin strategy $\sigma^M$ is to swerve with probability 1. If we set $\epsilon_1 = 0$, then it turns out that $\sigma^{NE}_1$ is an $\epsilon$-safe equilibrium strategy for player 1 for $0 \leq \epsilon_2 \leq 0.1$, and $\sigma^{M}_1$ is an $\epsilon$-safe equilibrium strategy for player 1 for $0.1 \leq \epsilon_2 \leq 1.$ 
It is not necessary that an $\epsilon$-safe equilibrium strategy always corresponds to a Nash equilibrium or maximin strategy. For example, with $\epsilon_1 = 0.05$ and $\epsilon_2 = 0.15$, an $\epsilon$-safe equilibrium strategy profile is for player 1 to swerve with probability 0.95 and player 2 to swerve with probability 0. 

As another example, consider the security game depicted in Figure~\ref{fi:security}, where the row player selects one of three targets to defend while the column player selects a target to attack. A Nash equilibrium for player 1 (row player) $\sigma^{NE}_1$ is to defend the targets with probabilities $(0.3136,0.4661,0.2203)$, and a maximin strategy $\sigma^{M}_1$ is to defend the targets with probabilities $(0.6144,0.0131,0.3725).$ Again using $\epsilon_1 = 0$, for  $\epsilon_2 \in [0,0.314]$ it turns out that $\sigma^{NE}_1$ is an $\epsilon$-safe equilibrium strategy for player 1, and for $\epsilon_2 \in [0.569,1]$ $\sigma^{M}_1$ is an $\epsilon$-safe equilibrium strategy for player 1. But for the region $\epsilon_2 \in [0.314,0.569]$ it turns out that the strategy $(0.4437,0.3666,0.1897)$ is an $\epsilon$-safe equilibrium strategy for player 1, which is neither a Nash equilibrium strategy nor a maximin strategy.

\begin{figure}
\begin{equation*}
\begin{bmatrix}
(4,-3) &(-1,1) &(-7,2) \\
(-5,5) &(2,-1) &(-1,4) \\
(-9,1) &(-1,8) &(9,-4) \\ 
\end{bmatrix}
\end{equation*}
\caption{Security game payoff matrix.}
\label{fi:security}
\end{figure}

\clearpage
\section{Algorithms for computing safe equilibrium}
We first present an exact algorithm for computing an $\epsilon$-safe equilibrium, followed by an approximation algorithm that runs quickly on large instances. The exact algorithm is based on a mixed-integer feasibility program formulation. We first present the algorithm for two players, for arbitrary $\epsilon_i \in [0,1].$ The algorithm builds on a related linear mixed-integer feasibility program formulation for computing Nash equilibrium in two-player general-sum games~\cite{Sandholm05:Mixed}.

We quote from the original description of the program formulation for two-player Nash equilibrium, and present the formulation below:

\begin{quote}
In our first formulation, the feasible solutions are exactly the equilibria of the game.  For every pure strategy $s_i$, there is  binary variable $b_{s_i}$.  If this variable is set to 1, the probability placed on the strategy must be 0. If it is set to 0, the strategy is allowed to be in the support, but the regret of the strategy must be 0. The formulation has the following variables other than the $b_{s_i}$.  For each player, there is a variable $u_i$ indicating the highest possible expected utility that that player can obtain given the other player's mixed strategy. For every pure strategy $s_i$, there is a variable $p_{s_i}$ indicating the probability placed on that strategy, a variable $u_{s_i}$ indicating the expected utility of playing that strategy (given the other player's mixed  strategy), and a variable $r_{s_i}$ indicating the regret of playing $s_i$.  The constant $U_i$ indicates the maximum difference between two utilities in the game for player $i$: 
$U_i = \max_{s^h_i, s^l_i \in S_i, s^h_{1-i},s^l_{1-i} \in S_{1-i}} \left[ u_i(s^h_i, s^h_{1-i}) - u_i(s^l_i, s^l_{1-i}) \right].$ 
The formulation follows below~\cite{Sandholm05:Mixed}.
\end{quote}

Find $p_{s_i},u_i,u_{s_i},r_{s_i},b_{s_i}$ such that:
\begin{align}
&\sum_{s_i \in S_i} p_{s_i} = 1 \mbox{ for all } i \\
&u_{s_i} = \sum_{s_{1-i} \in S_{1-i}} p_{s_{1-i}}u_i(s_i,s_{1-i}) \mbox{ for all } i, s_i \in S_i \\
&u_i \geq u_{s_i} \mbox{ for all } i, s_i \in S_i \label{eq:2p-u} \\
&r_{s_i} = u_i - u_{s_i} \mbox{ for all } i, s_i \in S_i \label{eq:2p-r1}\\
&p_{s_i} \leq 1 - b_{s_i} \mbox{ for all } i, s_i \in S_i \label{eq:2p-p}\\
&r_{s_i} \leq U_i b_{s_i} \mbox{ for all } i, s_i \in S_i \label{eq:2p-r2}\\
&p_{s_i} \geq 0 \mbox{ for all } i, s_i \in S_i \\
&u_i \geq 0 \mbox{ for all } i \\
&u_{s_i} \geq 0 \mbox{ for all } i, s_i \in S_i \\
&r_{s_i} \geq 0 \mbox{ for all } i, s_i \in S_i \label{eq:2p-rbound}\\
&b_{s_i} \mbox{ binary in } \{0,1\} \mbox{ for all } i, s_i \in S_i 
\end{align}

\begin{quote}
The first four constraints ensure that the $p_{s_i}$ values constitute a valid probability distribution and define the regret of a strategy. Constraint~\ref{eq:2p-p} ensures that $b_{s_i}$ can be set to 1 only when no probability is placed on $s_i$. On the other hand, Constraint~\ref{eq:2p-r2} ensures that the regret of a strategy equals 0, unless $b_{s_i} = 1$,  in which  case  the constraint is vacuous because the regret can never exceed $U_i$. (Technically, Constraint~\ref{eq:2p-u} is redundant as it follows from Constraints~\ref{eq:2p-r1} and~\ref{eq:2p-rbound}.)~\cite{Sandholm05:Mixed}
\end{quote}

We modify this program as follows. For every pure strategy $s_i$, we include two binary variables $b^1_{s_i}, b^2_{s_i}$. The first one corresponds to player $i$'s best response strategy, and the second corresponds to player $i$'s strategy that minimizes the opponent's payoff. Additionally, we include variables $u^j_i,p^j_{s_i},u^j_{s_i},r^j_{s_i}$ and constants $U^j_i$ for $j = 1,2.$ Given constants $\epsilon_1,\epsilon_2 \in [0,1]$, we create the following formulation for computing an $\epsilon$-safe equilibrium (note that we have removed the redundant Constraint~\ref{eq:2p-u}).

Find $p^j_{s_i},u^j_i,u^j_{s_i},r^j_{s_i},b^j_{s_i}$ such that:
\begin{align*}
&\sum_{s_i \in S_i} p^j_{s_i} = 1 \mbox{ for all } i,j \\
&u^1_{s_i} = \sum_{s_{1-i} \in S_{1-i}} \left[ u_i(s_i,s_{1-i})\left(\epsilon_{1-i} p^2_{s_{1-i}} + (1-\epsilon_{1-i})p^1_{s_{1-i}}\right)\right] \mbox{ for all } i, s_i \in S_i \\
&u^2_{s_i} = \sum_{s_{1-i} \in S_{1-i}} \left[-u_{-i}(s_{1-i},s_i)\left(\epsilon_{1-i} p^2_{s_{1-i}} + (1-\epsilon_{1-i})p^1_{s_{1-i}}\right)\right] \mbox{ for all } i, s_i \in S_i \\
&r^j_{s_i} = u^j_i - u^j_{s_i} \mbox{ for all } i, s_i \in S_i, j \\
&p^j_{s_i} \leq 1 - b^j_{s_i} \mbox{ for all } i, s_i \in S_i, j \\
&r^j_{s_i} \leq U^j_i b^j_{s_i} \mbox{ for all } i, s_i, j \in S_i \\
&p^j_{s_i} \geq 0 \mbox{ for all } i, s_i \in S_i, j \\
&u^j_i \geq 0 \mbox{ for all } i,j \\
&u^j_{s_i} \geq 0 \mbox{ for all } i, s_i \in S_i, j \\
&r^j_{s_i} \geq 0 \mbox{ for all } i, s_i \in S_i, j \\
&b^j_{s_i} \mbox{ binary in } \{0,1\} \mbox{ for all } i, s_i \in S_i, j
\end{align*}

For three players, we can use the following formulation, where new variables $p^{j_i,j_k}_{s_i,s_k}$ 
denote the product of the variables $p^{j_i}_{s_i}$ and $p^{j_k}_{s_k}.$ For the special player 1 we just have superscript 1, while for players 2 and 3 we have superscripts 1 and 2 (corresponding to the best-response strategy and the strategy that is worst-case for player 1). This formulation can be straightforwardly extended to a non-convex quadratically-constrained mixed-integer feasibility program formulation for $n$ players, and is based on a recent algorithm for computing multiplayer Nash equilibrium~\cite{Ganzfried20:Fast}. 

Find $p^j_{s_i},u^j_i,u^j_{s_i},r^j_{s_i},b^j_{s_i},p^{j_i,j_k}_{s_i,s_k}$ subject to:
\begin{align*}
&\sum_{s_i \in S_i} p^j_{s_i} = 1 \mbox{ for all } i, j \\
&u^1_{s_1} = \sum_{s_2 \in S_2} \sum_{s_3 \in S_3} [u_1(s_1,s_2,s_3) (\epsilon_2 \epsilon_3 p^{2,2}_{s_2,s_3} + \epsilon_2(1-\epsilon_3)p^{2,1}_{s_2,s_3}\\
& + (1-\epsilon_2)\epsilon_3p^{1,2}_{s_2,s_3} + (1-\epsilon_2)(1-\epsilon_3)p^{2,2}_{s_2,s_3})] \mbox{ for all } s_1 \in S_1\\
&u^1_{s_2} = \sum_{s_1 \in S_1} \sum_{s_3 \in S_3} [u_2(s_1,s_2,s_3) (\epsilon_3 p^{1,2}_{s_1,s_3} + (1-\epsilon_3)p^{1,1}_{s_1,s_3})] \mbox{ for all } s_2 \in S_2\\
&u^2_{s_2} = \sum_{s_1 \in S_1} \sum_{s_3 \in S_3} [-u_1(s_1,s_2,s_3) (\epsilon_3 p^{1,2}_{s_1,s_3} + (1-\epsilon_3)p^{1,1}_{s_1,s_3})] \mbox{ for all } s_2 \in S_2\\
&u^1_{s_3} = \sum_{s_1 \in S_1} \sum_{s_2 \in S_2} [u_3(s_1,s_2,s_3) (\epsilon_2 p^{1,2}_{s_1,s_2} + (1-\epsilon_2)p^{1,1}_{s_1,s_2})] \mbox{ for all } s_3 \in S_3\\
&u^2_{s_3} = \sum_{s_1 \in S_1} \sum_{s_2 \in S_2} [-u_1(s_1,s_2,s_3) (\epsilon_2 p^{1,2}_{s_1,s_2} + (1-\epsilon_2)p^{1,1}_{s_1,s_2})] \mbox{ for all } s_3 \in S_3\\
&p^{j_i,j_k}_{s_i,s_k} = p^{j_i}_{s_i} \cdot p^{j_k}_{s_k} \mbox{ for all } j_i, j_k, s_i \in S_1, s_k \in S_2\\
&p^{j_i,j_k}_{s_i,s_k} = p^{j_i}_{s_i} \cdot p^{j_k}_{s_k} \mbox{ for all } j_i, j_k, s_i \in S_1, s_k \in S_3\\
&p^{j_i,j_k}_{s_i,s_k} = p^{j_i}_{s_i} \cdot p^{j_k}_{s_k} \mbox{ for all } j_i, j_k, s_i \in S_2, s_k \in S_3\\
&r^j_{s_i} = u^j_i - u^j_{s_i} \mbox{ for all } i, s_i \in S_i, j \\
&p^j_{s_i} \leq 1 - b^j_{s_i} \mbox{ for all } i, s_i \in S_i, j \\
&r^j_{s_i} \leq U^j_i b^j_{s_i} \mbox{ for all } i, s_i, j \in S_i \\
&p^j_{s_i} \geq 0 \mbox{ for all } i, s_i \in S_i, j \\
&u^j_i \geq 0 \mbox{ for all } i,j \\
&u^j_{s_i} \geq 0 \mbox{ for all } i, s_i \in S_i, j \\
&r^j_{s_i} \geq 0 \mbox{ for all } i, s_i \in S_i, j \\
&b^j_{s_i} \mbox{ binary in } \{0,1\} \mbox{ for all } i, s_i \in S_i, j
\end{align*}

This provides an exact algorithm for computing $\epsilon$-safe equilibrium in $n$-player games. Next we consider an approximation algorithm that scales to large games. Two algorithms that have been recently applied to approximate Nash equilibrium in large multiplayer games are (counterfactual) regret minimization~\cite{Zinkevich07:Regret} and fictitious play~\cite{Brown51:Iterative,Robinson51:Iterative}. These are iterative self-play procedures that have been proven to converge to Nash equilibrium in two-player zero-sum games, but not for more than two players. Recently it has been shown that fictitious play outperforms regret minimization for multiplayer games~\cite{Ganzfried20:Fictitious}, so we will base our algorithms on fictitious play. Algorithm~\ref{al:approx-2plr} presents our algorithm for computing $\epsilon$-safe equilibrium in two-player games, and Algorithm~\ref{al:approx-nplr} presents our algorithm for $n$-player games. Note that $\rho^t_i$ and $\tau^t_i$ are not actually needed in the algorithms (for $t > 0$), but they will be useful for evaluating the algorithms in our experiments.

\begin{algorithm}
\caption{Approximation algorithm for $\epsilon$-safe equilibrium in two-player games}
\label{al:approx-2plr}
\textbf{Inputs}: Game $G$, $\epsilon_1,\epsilon_2 \in [0,1]$, initial mixed strategies $\tau^0_i, \rho^0_i \in \Sigma_i$ for $i = 1,2$, number of iterations $T$.
\begin{algorithmic}
\State $\sigma^0_i = \epsilon_i \tau^0_i + (1-\epsilon_i) \rho^0_i$ for $i = 1,2$
\For {$t = 1$ to $T$}
\State $\rho'_i = \argmax_{\sigma_i \in \Sigma_i} u_i(\sigma_i,\sigma^{t-1}_{-i})$ for $i = 1,2$
\State $\tau'_i = \argmin_{\sigma_i \in \Sigma_i} u_{-i}(\sigma^{t-1}_{-i},\sigma_i)$ for $i = 1,2$
\State $\sigma'_i = \epsilon_i \tau'_i + (1-\epsilon_i) \rho'_i$ for $i = 1,2$
\State $\sigma^t_i = \left( 1 - \frac{1}{t+1} \right) \sigma^{t-1} _i + \frac{1}{t+1} \sigma'^t _i$ for $i = 1,2$
\State $\rho^t_i = \left( 1 - \frac{1}{t+1} \right) \rho^{t-1} _i + \frac{1}{t+1} \rho'^t _i$ for $i = 1,2$
\State $\tau^t_i = \left( 1 - \frac{1}{t+1} \right) \tau^{t-1} _i + \frac{1}{t+1} \tau'^t _i$ for $i = 1,2$
\EndFor
\State Output strategy profile $(\sigma^T_1,\sigma^T_2)$
\end{algorithmic}
\end{algorithm}

\begin{algorithm}
\caption{Approximation algorithm for $\epsilon$-safe equilibrium in $n$-player games, $n > 2$}
\label{al:approx-nplr}
\textbf{Inputs}: Game $G$, $\epsilon_i \in [0,1]$ for $i = 2,\ldots,n$, initial mixed strategy $\sigma^0_1 \in \Sigma_1$, initial mixed strategies $\tau^0_i, \rho^0_i \in \Sigma_i$ for $i = 2,\ldots,n$, number of iterations $T$.
\begin{algorithmic}
\State $\sigma^0_i = \epsilon_i \tau^0_i + (1-\epsilon_i) \rho^0_i$ for $i = 2,\ldots,n$
\For {$t = 1$ to $T$}
\State $\sigma'_1 = \argmax_{\sigma_1 \in \Sigma_1} u_1(\sigma_1,\sigma^{t-1}_{-1})$
\State $\rho'_i = \argmax_{\sigma_i \in \Sigma_i} u_i(\sigma_i,\sigma^{t-1}_{-i})$ for $i = 2,\ldots,n$
\State $\tau'_i = \argmin_{\sigma_i \in \Sigma_i} u_{1}(\hat{\sigma})$ where $\hat{\sigma}$ is the strategy profile where player $i$ follows $\sigma_i$ and the other players $j \neq i$ follow $\sigma^{t-1}_j$, for $j = 2,\ldots,n$
\State $\sigma'_i = \epsilon_i \tau'_i + (1-\epsilon_i) \rho'_i$ for $i = 2,\ldots,n$
\State $\sigma^t_i = \left( 1 - \frac{1}{t+1} \right) \sigma^{t-1} _i + \frac{1}{t+1} \sigma'^t _i$ for $i = 1, \ldots,n$
\State $\rho^t_i = \left( 1 - \frac{1}{t+1} \right) \rho^{t-1} _i + \frac{1}{t+1} \rho'^t _i$ for $i = 2,\ldots,n$
\State $\tau^t_i = \left( 1 - \frac{1}{t+1} \right) \tau^{t-1} _i + \frac{1}{t+1} \tau'^t _i$ for $i = 2,\ldots,n$
\EndFor
\State Output strategy profile $(\sigma^T_1,\ldots,\sigma^T_n)$
\end{algorithmic}
\end{algorithm}

\begin{proposition}
In Algorithms~\ref{al:approx-2plr} and~\ref{al:approx-nplr}, $\sigma^t_i = \epsilon_i \tau^t_i + (1-\epsilon_i) \rho^t_i$ for all $t$ and $i$ (for $i > 1$ for Algorithm~\ref{al:approx-nplr}). 
\end{proposition}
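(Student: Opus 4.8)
The plan is to prove this by a straightforward induction on $t$, exploiting the structural fact that the three quantities $\sigma^t_i$, $\rho^t_i$, and $\tau^t_i$ are all updated by the \emph{same} convex-combination rule with the identical weight $\frac{1}{t+1}$ at each iteration. Because the relation $\sigma = \epsilon_i \tau + (1-\epsilon_i)\rho$ is affine in $(\tau,\rho)$, and each of $\sigma,\tau,\rho$ is averaged against the previous iterate with matching weights, the identity will be preserved step by step.

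For the base case $t = 0$, I would simply note that the identity holds by construction: both algorithms explicitly initialize $\sigma^0_i = \epsilon_i \tau^0_i + (1-\epsilon_i)\rho^0_i$ in their first line.

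For the inductive step, I would assume $\sigma^{t-1}_i = \epsilon_i \tau^{t-1}_i + (1-\epsilon_i)\rho^{t-1}_i$ and write $\lambda = \frac{1}{t+1}$. The one auxiliary fact I would invoke is that the freshly computed iterate satisfies $\sigma'^t_i = \epsilon_i \tau'^t_i + (1-\epsilon_i)\rho'^t_i$, which is exactly the line in each algorithm that defines $\sigma'_i$ from $\tau'_i$ and $\rho'_i$. Substituting the update rules for $\tau^t_i$ and $\rho^t_i$ into the expression $\epsilon_i \tau^t_i + (1-\epsilon_i)\rho^t_i$ and regrouping the resulting terms by the common weights $1-\lambda$ and $\lambda$ gives
$$\epsilon_i \tau^t_i + (1-\epsilon_i)\rho^t_i = (1-\lambda)\left[\epsilon_i \tau^{t-1}_i + (1-\epsilon_i)\rho^{t-1}_i\right] + \lambda\left[\epsilon_i \tau'^t_i + (1-\epsilon_i)\rho'^t_i\right].$$
Applying the inductive hypothesis to the first bracket and the auxiliary fact to the second collapses the right-hand side to $(1-\lambda)\sigma^{t-1}_i + \lambda\,\sigma'^t_i$, which is precisely the update rule defining $\sigma^t_i$. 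This closes the induction.

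I would then remark that the argument applies verbatim to both algorithms: to Algorithm~\ref{al:approx-2plr} for $i = 1,2$ and to Algorithm~\ref{al:approx-nplr} for $i = 2,\ldots,n$, with player 1 excluded in the latter case simply because no $\tau,\rho$ decomposition is defined for it. There is no genuine obstacle here; the only point requiring care is to confirm that $\sigma^t_i$, $\rho^t_i$, and $\tau^t_i$ are all averaged with the \emph{same} weight $\frac{1}{t+1}$ at every step, since it is exactly this synchronization of the averaging that lets the affine identity commute with the iterative update.
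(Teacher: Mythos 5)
Your proposal is correct and follows essentially the same argument as the paper: induction on $t$ with the base case given by the initialization $\sigma^0_i = \epsilon_i \tau^0_i + (1-\epsilon_i)\rho^0_i$, and an inductive step that uses the matching averaging weight $\frac{1}{t+1}$ together with the identity $\sigma'_i = \epsilon_i \tau'_i + (1-\epsilon_i)\rho'_i$ to preserve the affine relation. The only cosmetic difference is that you expand $\epsilon_i \tau^t_i + (1-\epsilon_i)\rho^t_i$ and collapse it to $\sigma^t_i$, whereas the paper expands $\sigma^{k+1}_i$ and regroups into $\epsilon_i \tau^{k+1}_i + (1-\epsilon_i)\rho^{k+1}_i$; these are the same computation read in opposite directions.
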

\begin{proof}
This is true for $t = 0$ by the definition of $\sigma^0_i.$ Now suppose 
$\sigma^t_i = \epsilon_i \tau^t_i + (1-\epsilon_i) \rho^t_i$ for all $t \leq k$, for some $k \geq 0$.
$$\sigma^{k+1}_i =  \left( 1 - \frac{1}{k+2} \right) \sigma^k _i + \frac{1}{k+2} \sigma'^{k+1} _i$$
$$= \left( 1 - \frac{1}{k+2} \right) \left(\epsilon_i \tau^k_i + (1-\epsilon_i) \rho^k_i\right) + \frac{1}{k+2} \left(\epsilon_i \tau'_i + (1-\epsilon_i) \rho'_i\right)$$
$$= \epsilon_i \tau^{k+1}_i + (1-\epsilon_i) \rho^{k+1}_i$$
\end{proof}


\vspace{-0.41in}
\section{Experiments}
\label{se:exp}
For the first set of experiments we investigate the runtime of our exact two-player algorithm as the number of pure strategies per player varies. For these experiments we set $\epsilon_1 = 0$, $\epsilon_2 = 0.05.$  We used an Intel Core i7-8550U at 1.80 GHz with 16 GB of RAM under 64-bit Windows 11 (8 threads). We used Gurobi version 9.5~\cite{Gurobi21:Gurobi}. We experimented on games with all payoffs uniformly random in [0,1]. For $m=2,3,5,10$ we ran 10,000 trials, and for $m = 15,20,25$ we ran 1,000. Here $m$ refers to the number of pure strategies per player (note that we experiment on games where all players have the same number of pure strategies, while our solution concepts and algorithms also apply to games where the players can have different numbers of pure strategies).
The results in Table~\ref{ta:results-2p-exact} indicate that the algorithm runs in less than a second for up to $m = 20$.

\begin{table}[!ht]
\centering
\begin{tabular}{|*{3}{c|}} \hline
$m$ &Avg. time(s) &Median time(s) \\ \hline
2 &\num{4.647e-4} &0.0\\ \hline
3 &0.001 &\num{9.975e-4}\\ \hline
5 &0.010 &0.007 \\ \hline
10 &0.062 &0.061  \\ \hline
15 &0.186 &0.173  \\ \hline
20 &0.736 &0.555 \\ \hline
25 &3.815 &2.007 \\ \hline
\end{tabular}
\caption{Running times for exact 2-player algorithm for varying number of pure strategies per player $(m)$, using $\epsilon_1 = 0$, $\epsilon_2 = 0.05$.}
\label{ta:results-2p-exact}
\end{table}

Next we experimented with the exact three-player algorithm, using $\epsilon_2 = \epsilon_3 = 0.05.$ Again we used Gurobi 9.5 with 8 cores on a laptop. For these experiments we used Gurobi's non-convex quadratic solver. For $m=2,3$ we ran 1,000 trials, and for $m = 4,5$ we ran 100. The results in Table~\ref{ta:results-3p-exact} indicate that the algorithm runs in a fraction of a second for $m = 2,3$ and several seconds for $m = 4.$

\begin{table}[!ht]
\centering
\begin{tabular}{|*{3}{c|}} \hline
$m$ &Avg. time(s) &Median time(s) \\ \hline
2 &0.036 &0.032\\ \hline
3 &0.194 &0.169\\ \hline
4 &4.856 &1.787\\ \hline
5 &468.731 &97.407 \\ \hline
\end{tabular}
\caption{Running times for exact 3-player algorithm for varying number of pure strategies per player $(m)$, using $\epsilon_1 = 0$, $\epsilon_2 = 0.05$, $\epsilon_3 = 0.05$.}
\label{ta:results-3p-exact}
\end{table}

We next experimented with our 2-player approximation algorithm (Algorithm~\ref{al:approx-2plr}). Again we used $\epsilon_1 = 0$, $\epsilon_2 = 0.05.$ For these experiments we just used a single core (note that the algorithm can be parallelized which would result in even lower runtimes). For each value of $m$ we ran 10,000 trials, performing 10,000 iterations of Algorithm~\ref{al:approx-2plr} for each trial. The results in Table~\ref{ta:results-2p-approx} indicate that the algorithm runs in just a fraction of a second for $m = 25.$ 

For player $i,$ define 
$$\delta^{\rho}_i = \max_{\sigma_i \in \Sigma_i} u_i(\sigma_i,\sigma^{T}_{-i}) - u_i(\rho^T_i,\sigma^{T}_{-i}).$$
That is, $\delta^{\rho}_i$ denotes the difference between the payoff of playing a best response to $\sigma^{T}_{-i}$ and following $\rho^T_i.$
Then define $\delta^{\rho} = \max_i \delta^{\rho}_i.$
Similarly, define 
$$\delta^{\tau} = u_{-i}(\sigma^{T}_{-i},\tau^T_i) - \min_{\sigma_i \in \Sigma_i} u_{-i}(\sigma^{T}_{-i},\sigma_i),$$
and $\delta^{\tau} = \max_i \delta^{\tau}_i.$
If both $\delta^{\rho} = 0$ and $\delta^{\tau} = 0$, then $\sigma^{T}$ would constitute an exact $\epsilon$-safe equilibrium.
So these values can be viewed as measures of approximation error. Table~\ref{ta:results-2p-approx} shows these approximation errors for different values of $m$.
While this algorithm runs significantly faster than the exact algorithm and can scale to larger games (particularly when implemented with parallelization), this 
comes at some cost to the accuracy of the solution.

\begin{table}[!ht]
\centering
\begin{tabular}{|*{4}{c|}} \hline
$m$ &Avg. time(s) &Avg. $\delta^{\rho}$ &Avg. $\delta^{\tau}$\\ \hline
2 &\num{9.642e-4} &\num{1.822e-4} &\num{9.385e-5} \\ \hline
3 &0.001 &\num{8.839e-4} &\num{6.865e-4} \\ \hline
5 &0.002 &0.004 &0.003 \\ \hline
10 &0.006 &0.013 &0.014\\ \hline
15 &0.010 &0.022 &0.027 \\ \hline
20 &0.016 &0.031 &0.039 \\ \hline
25 &0.026 &0.039 &0.049 \\ \hline
\end{tabular}
\caption{Running times and degrees of approximation error for 2-player approximation algorithm for varying number of pure strategies per player $(m)$, using $\epsilon_1 = 0$, $\epsilon_2 = 0.05$.}
\label{ta:results-2p-approx}
\end{table}

We next experimented with a variant of Algorithm~\ref{al:approx-2plr} based on a new initialization procedure for fictitious play called maximin initialization~\cite{Ganzfried22:Fictitious}.
While the prior experiments initialized $\rho_i$ and $\tau_i$ to be mixed strategies with equal probability for all pure strategies, maximin initialization generates a set of $K$ initial strategy profiles
and selects the run of the fictitious play algorithm that produces the smallest error. We can implement the same idea with Algorithm~\ref{al:approx-2plr}.
Let $\delta^{\rho,k}$ and $\tau^{\rho,k}$ denote the values of $\delta$ using initialization $k.$ 
Let $\delta'^k = \max\{\delta^{\rho,k}, \tau^{\rho,k}\},$ and let $k' = \argmin_k \delta'^k.$
Then define $\delta^{\rho} = \delta^{\rho,k'},$ $\delta^{\tau} = \delta^{\tau,k'}.$ For our experiments we used $K = 10$ initial strategy profiles. The results in Table~\ref{ta:results-2p-approx-maximin} show that maximin initialization significantly reduces the approximation error, though increases the runtime. We ran these experiments using a 64-core server parallelizing over the 10,000 trials, though we just used a single core for each algorithm run.

\begin{table}[!ht]
\centering
\begin{tabular}{|*{4}{c|}} \hline
$m$ &Avg. time(s) &Avg. $\delta^{\rho}$ &Avg. $\delta^{\tau}$\\ \hline
2 &0.186 &\num{1.029e-4} &\num{3.426e-5} \\ \hline
3 &0.243 &\num{4.775e-4} &\num{3.164e-4} \\ \hline
5 &0.327 &0.001 &0.001 \\ \hline
10 &0.370 &0.003 &0.003\\ \hline
15 &0.582 &0.003 &0.003 \\ \hline
20 &0.806 &0.004 &0.004 \\ \hline
25 &1.072 &0.004 &0.006 \\ \hline
\end{tabular}
\caption{Running times and degrees of approximation error for 2-player approximation algorithm for varying number of pure strategies per player $(m)$, using $\epsilon_1 = 0$, $\epsilon_2 = 0.05$,
using 10 initial strategy profiles.}
\label{ta:results-2p-approx-maximin}
\end{table}

We ran similar experiments for Algorithm~\ref{al:approx-nplr} on 3-player games, using $\epsilon_1 = 0$, $\epsilon_2 = 0.05$, $\epsilon_3 = 0.05$. Again we used a single core per run of the algorithm, and ran 10,000 trials for each value of $m$, with 10,000 iterations of the algorithm per trial. In Table~\ref{ta:results-3p-approx}, we define $\delta^{\rho}$ and $\delta^{\tau}$ as before except that they are just the maximum over the values for players 2 and 3. We also define
$$\delta^{\sigma} =  \max_{\sigma_1 \in \Sigma_1} u_i(\sigma_1,\sigma^{T}_{-1}) - u_1(\sigma^T_1,\sigma^{T}_{-1}).$$
The results in Table~\ref{ta:results-3p-approx} show how the running times and approximation errors for different values of $m$.
We also experimented on 3-player games using maximin initialization with $K = 10$. 
The results in Table~\ref{ta:results-3p-approx-maximin} show that, as for the two-player case, maximin initialization leads to a significant reduction in approximation error.

\begin{table}[!ht]
\centering
\begin{tabular}{|*{5}{c|}} \hline
$m$ &Avg. time(s) &Avg. $\delta^{\sigma}$ &Avg. $\delta^{\rho}$ &Avg. $\delta^{\tau}$\\ \hline
2 &0.007 &0.001 &0.002 &\num{6.996e-4} \\ \hline
3 &0.011 &0.003 &0.006 &0.004 \\ \hline
5 &0.021 &0.012 &0.017 &0.018 \\ \hline
10 &0.116 &0.037 &0.049 &0.057 \\ \hline
15 &0.322 &0.048 &0.063 &0.072 \\ \hline
20 &0.832 &0.057 &0.073 &0.080 \\ \hline
25 &1.707 &0.061 &0.077 &0.085 \\ \hline
\end{tabular}
\caption{Running times and degrees of approximation error for 3-player approximation algorithm for varying number of pure strategies per player $(m)$, using $\epsilon_1 = 0$, $\epsilon_2 = 0.05$,  $\epsilon_3 = 0.05$.}
\label{ta:results-3p-approx}
\end{table}

\begin{table}[!ht]
\centering
\begin{tabular}{|*{5}{c|}} \hline
$m$ &Avg. time(s) &Avg. $\delta^{\sigma}$ &Avg. $\delta^{\rho}$ &Avg. $\delta^{\tau}$\\ \hline
2 &0.414 &\num{5.224e-4} &\num{7.279e-4} &\num{2.930e-4} \\ \hline
3 &0.595 &0.002 &0.002 &0.002 \\ \hline
5 &0.993 &0.003 &0.004 &0.003 \\ \hline
10 &2.897 &0.007 &0.009 &0.010 \\ \hline
15 &7.706 &0.015 &0.019 &0.023 \\ \hline
20 &16.536 &0.023 &0.031 &0.035 \\ \hline
25 &36.392 &0.034 &0.040 &0.043 \\ \hline
\end{tabular}
\caption{Running times and degrees of approximation error for 3-player approximation algorithm using $\epsilon_1 = 0$, $\epsilon_2 = 0.05$,  $\epsilon_3 = 0.05$, and 10 initial strategy profiles.}
\label{ta:results-3p-approx-maximin}
\end{table}

\section{Conclusion}
We defined a new game-theoretic solution concept called safe equilibrium in which players behave potentially arbitrarily with some fixed probability $\epsilon_i.$ 
We proved that a safe equilibrium is guaranteed to exist for any number of players and all possible values of the parameters $\epsilon_i$, and we proved that its computation is PPAD-hard. We devised exact algorithms, both for the 2-player and $n$-player cases, which we showed are able to solve small games relatively quickly. We also presented approximation algorithms that achieve significantly lower runtimes but at the cost of a degree of approximation error. While we focused on strategic-form games, which model situations of perfect information with simultaneous actions, our analysis and algorithms should extend straightforwardly to more complex settings such as those with imperfect information and stochastic events. In the event that historical data of opponents' play is available, our algorithms can be integrated with an opponent modeling algorithm to provide models of opponents' irrational strategies that are more realistic than the worst-case assumption.

While Nash equilibrium has emerged as the central game-theoretic solution concept, its assumption that all players behave rationally may be too strict when modeling real human decision makers. As game theory is being increasingly applied to high-stakes situations, such as self-driving cars and national security, it is essential that strategies are able to accommodate the possibility of opponents' irrationality, which may be unpredictable. At the other end of the spectrum, a maximin strategy assumes that all opponents are trying to minimize our payoff, resulting in exceedingly conservative play with low payoffs. The new safe equilibrium concept effectively bridges the gap between these two extremes, enabling us to construct strategies that are robust to arbitrary degrees of opponents' irrationality.

\bibliographystyle{plain}
\bibliography{C://FromBackup/Research/refs/dairefs}

\end{document}